\documentclass[letterpaper, 10 pt, conference]{ieeeconf}  

\IEEEoverridecommandlockouts

\overrideIEEEmargins

\usepackage{amsmath,amssymb,amsfonts}
\usepackage{bm}
\usepackage{graphicx}
\usepackage{comment}
\usepackage{algorithm}
\usepackage{algorithmic}
\usepackage{epsfig}
\usepackage{times}
\allowdisplaybreaks
\usepackage{layout}
\usepackage[dvipsnames]{xcolor}
\usepackage{siunitx}
\usepackage{fp}
\usepackage{tikz}
\usepackage{pgfplots}
\pgfplotsset{compat=newest}
\usetikzlibrary{plotmarks}
\usetikzlibrary{arrows.meta}
\usepgfplotslibrary{patchplots}
\usepackage{grffile}
\usepackage{bbm}
\usepackage{eso-pic}

\makeatletter
\let\NAT@parse\undefined
\makeatother
\usepackage{hyperref}

\pgfplotsset{compat=1.18}
\usetikzlibrary{arrows,shapes,backgrounds,patterns,fadings,decorations,
	positioning,external,arrows.meta,pgfplots.fillbetween, pgfplots.groupplots}

\newtheorem{assumption}{Assumption}
\newtheorem{theorem}{Theorem}
\newtheorem{lemma}{Lemma}
\newtheorem{remark}{Remark}
\newtheorem{definition}{Definition}

\newcommand\copyrighttext{%
	\footnotesize \textcopyright 2025 IEEE. Personal use of this material is permitted.  Permission from IEEE must be obtained for all other uses, in any current or future media, including reprinting/republishing this material for advertising or promotional purposes, creating new collective works, for resale or redistribution to servers or lists, or reuse of any copyrighted component of this work in other works.}
\newcommand\copyrightnotice{%
	\begin{tikzpicture}[remember picture,overlay]
		\node[anchor=south,yshift=5pt] at (current page.south) {\fbox{\parbox{\dimexpr\textwidth-\fboxsep-\fboxrule\relax}{\copyrighttext}}};
	\end{tikzpicture}%
}

\title{\LARGE \bf
Barrier Certificates for Unknown Systems with Latent States and Polynomial Dynamics using Bayesian Inference}

\author{Robert Lefringhausen$^{\dagger\, 1}$, Sami Leon Noel Aziz Hanna$^{\dagger\, 1}$, Elias August$^{2}$, and Sandra Hirche$^{1}$
	\thanks{*This work was supported by the European Research Council (ERC) Consolidator Grant "Safe data-driven control for human-centric systems (CO-MAN)" under grant agreement number 864686. The ChatGPT large language model \cite{openai2025} was used to improve the language and readability of the manuscript and to assist with the implementation of the simulation.}
	\thanks{$^{\dagger}$Both authors contributed equally.}
	\thanks{$^{1}$Chair of Information-oriented Control, School of Computation, Information and Technology, Technical University of Munich, Germany {\tt\small [robert.lefringhausen, sami.noel, hirche] @tum.de}.}
	\thanks{$^{2}$Department of Engineering, Reykjavík University, Iceland {\tt\small eliasaugust@ru.is}.}
}

\newcommand\DOItext{\footnotesize This is the accepted version of a paper published in the proceedings of the 2025 IEEE Conference on Decision and Control (CDC). \\ \footnotesize The final published paper can be found at \href{https://doi.org/10.1109/CDC57313.2025.11312207}{doi:10.1109/CDC57313.2025.11312207}.}

\AddToShipoutPicture{\begin{tikzpicture}[remember picture,overlay, align=center]\node[anchor=north,yshift=-15pt] at (current page.north) {\DOItext}; \end{tikzpicture}}

\begin{document}
    \maketitle
    \copyrightnotice
    \thispagestyle{empty}
    \pagestyle{empty}

    \begin{abstract}
    Certifying safety in dynamical systems is crucial, but barrier certificates\,---\,widely used to verify that system trajectories remain within a safe region\,---\,typically require explicit system models. When dynamics are unknown, data-driven methods can be used instead, yet obtaining a valid certificate requires rigorous uncertainty quantification. For this purpose, existing methods usually rely on full-state measurements, limiting their applicability. This paper proposes a novel approach for synthesizing barrier certificates for unknown systems with latent states and polynomial dynamics. A Bayesian framework is employed, where a prior in state-space representation is updated using output data via a targeted marginal Metropolis--Hastings sampler. The resulting samples are used to construct a barrier certificate through a sum-of-squares program. Probabilistic guarantees for its validity with respect to the true, unknown system are obtained by testing on an additional set of posterior samples. The approach and its probabilistic guarantees are illustrated through a numerical simulation.
    \end{abstract}

    \section{Introduction}
\label{sec:introduction}

Ensuring the safety of dynamical systems is a critical concern in applications such as human-robot interaction, autonomous driving, and medical devices, where failures can lead to severe consequences. In such scenarios, safety constraints typically mandate that the system state remains within a predefined allowable region. Barrier certificates~\cite{ames2019} provide a systematic framework for verifying safety by establishing mathematical conditions that guarantee that system trajectories remain within these regions. Traditionally, such certificates are synthesized using explicit models of the system dynamics, enabling rigorous verification through analytic methods. However, for complex systems, obtaining an accurate first-principles model can be prohibitively time-consuming or even infeasible. In cases where an explicit model is unavailable, data-driven approaches offer an alternative by leveraging observational data to infer system behavior. While promising, constructing barrier certificates from data presents significant challenges, as ensuring their validity requires rigorously accounting for uncertainties arising from limited training data (epistemic uncertainty) and noise (aleatory uncertainty).

Direct approaches construct barrier certificates from trajectory data without an intermediate model, bypassing explicit system identification \cite{salamati2024, salamati2022, nejati2023}. These methods frame safety verification as an optimization problem over collected data and provide formal probabilistic guarantees. However, a key limitation is their inability to infer system behavior outside the observed data, making them highly dependent on extensive data collection to obtain a sufficiently large number of independent samples for reliable certification. This dependency renders them impractical for systems where acquiring sufficient data is costly or infeasible.

Indirect approaches, by contrast, first infer a system model from data and then use it either to verify safety specifications or to synthesize controllers that enforce them. Bayesian learning methods, such as Gaussian process regression, provide a principled way to quantify uncertainty and propagate confidence intervals through the safety verification process \cite{dhiman2021, jackson2020, jagtap2020, zhang2024}. Beyond uncertainty quantification, Bayesian learning methods enable the integration of prior knowledge, such as structural properties or physical insights, thereby improving sample efficiency. 

However, even when a reliable model is available, finding a valid barrier certificate is inherently challenging, and systematic approaches exist only for specific system classes. For polynomial dynamics\,---\,which arise in various physical and engineering domains, including robotics and chemical reaction networks\,---\,sum-of-squares (SOS) programming \cite{papachristodoulou2005} offers a principled framework to systematically construct barrier certificates using convex optimization.

A fundamental limitation of the aforementioned direct and indirect approaches is their reliance on full-state measurements, which is often unrealistic in real-world applications where only partial or noisy measurements are available. In such cases, both the system dynamics and latent states must be inferred simultaneously, significantly complicating uncertainty quantification. While recent work has addressed uncertainty-aware optimal control for unknown systems with latent states \cite{lefringhausen2024}, to the best of our knowledge, no existing work extends such ideas to barrier certificate synthesis.

The main contribution of this paper is a novel method for identifying and verifying invariant sets for unknown systems with latent states and polynomial dynamics using only output data, addressing the fundamental limitation of existing approaches that require full-state measurements. To achieve data-efficient inference while rigorously quantifying uncertainty, we adopt a Bayesian approach and define a prior in state-space representation, which facilitates the incorporation of structural insights or domain knowledge. Since updating this prior is essential to reduce uncertainty to a level where certificate synthesis becomes feasible, but the corresponding posterior distribution is analytically intractable, we employ a targeted marginal Metropolis--Hastings sampler \cite{robert2004} to draw samples from it. These samples are then used to formulate the search for a barrier certificate and a corresponding invariant set as a sum-of-squares (SOS) optimization problem. An additional test set of posterior samples is used to assess the certificate, yielding finite-sample probabilistic guarantees that it holds for the true, unknown system. The proposed approach is illustrated in a numerical simulation.

The remainder of this paper is structured as follows. Section~\ref{sec:problem} defines the problem setting. Section~\ref{sec:MMH} introduces the marginal Metropolis--Hastings algorithm used for Bayesian inference. Section~\ref{sec:barrier_certificates} describes how the obtained samples are used to synthesize barrier certificates via SOS programming and how the validation procedure leads to probabilistic guarantees. Section~\ref{sec:simulation} illustrates the proposed method and its guarantees using a numerical simulation. Section~\ref{sec:conclusion} provides concluding remarks.
    \section{Problem Formulation}
\label{sec:problem}
Consider a discrete-time autonomous system of the form\footnote{\textbf{Notation:} Lower/upper case bold symbols denote vectors/matrices, respectively. $\mathbb{R}$ denotes the set of real numbers and $\mathbb{N}^0$ and $\mathbb{N}$ the set of natural numbers with and without zero. $p(\cdot)$ denotes the probability density function (pdf). (Multivariate) Gaussian distributed random variables with mean $\bm{\mu}$ and variance $\bm{\Sigma}$ are denoted by $\mathcal{N}(\bm{\mu},\bm{\Sigma})$. $\bm{I}_n$ denotes the $n \times n$ identity matrix and $\bm{0}$ the zero matrix of appropriate dimension.}
\begin{subequations} \label{eq:system}
    \begin{align}
        \bm{x}_{t+1} &= \bm{f}_{\bm{\theta}}(\bm{x}_t),\\
        \bm{y}_t &= \bm{g}_{\bm{\theta}}(\bm{x}_t) + \bm{w}_t,
    \end{align}
\end{subequations}
with state $\bm{x}_t \in \mathbb{R}^{n_x}$, which is not directly observed, and output $\bm{y}_t \in \mathbb{R}^{n_y}$ at time $t\in\mathbb{N}^0$. The vector $\bm{\theta} \in \mathbb{R}^{n_\theta}$ collects all unknown parameters of the model. The system dynamics are governed by a state-transition function $\bm{f}_{\bm{\theta}}(\cdot)$ and an observation function $\bm{g}_{\bm{\theta}}(\cdot)$, which we assume to belong to a known parametric class characterized by $\bm{\theta}$. Specifically, $\bm{f}_{\bm{\theta}}(\cdot)$ is assumed to be a polynomial function with a known monomial basis, while $\bm{g}_{\bm{\theta}}(\cdot)$ is not restricted to be polynomial and may be of any form. The system is subject to independent and identically distributed (iid) measurement noise $\bm{w}_t$, drawn from a distribution $\mathcal{W}_{\bm{\theta}}$ with probability density function $p_{\mathcal{W}}(\bm{w}_t \mid \bm{\theta})$,  whose characteristics (e.g., mean, variance) are also determined by $\bm{\theta}$. The assumption of a parametric representation is motivated by the fact that, in many physical systems, the form of the dynamics can be derived from known physical laws, enabling the formulation of expressive yet interpretable models. Similarly, prior knowledge about the measurement process often guides the choice of a suitable parametric form for the observation function and noise distribution.

In order to enable theoretically justified inference and provide formal generalization guarantees, we assume the following. 
\begin{assumption}
    \label{as:prior}
    Priors $p(\bm{\theta})$ and $p(\bm{x}_{0})$ for the model parameters and the initial state of the observed trajectory are available\footnote{Note that the prior for $\bm{\theta}$ and $\bm{x}_{0}$ implies a prior for all subsequent states.}. The unknown system parameters and initial state are samples from these priors.
\end{assumption}

This assumption is justified by the fact that domain knowledge or previous experimental data can provide meaningful prior distributions in many practical settings. Similarly, the initial state is typically not arbitrary but falls within a known range based on operational conditions.

The objective of this work is to find a forward-invariant set $\mathcal{C} \subset \mathbb{R}^{n_x}$ of system~\eqref{eq:system} within a prespecified region of the state space. To approach this task in a data-driven manner, we assume that at time $t=T$, the dataset $\mathbb{D}=\{\bm{y}_{t}\}_{t=0:T}$, consisting of the most recent $T+1$ output measurements, is available. We formally define a forward-invariant set as follows.
\begin{definition}\label{def:forward_invariant}
    A set $\mathcal{C}$ is forward invariant for system \eqref{eq:system} if, whenever the state $\bm{x}_t$ enters $\mathcal{C}$, it remains in $\mathcal{C}$ for all future times, i.e.,
    \begin{equation} 
        \bm{x}_t \in \mathcal{C} \quad \Rightarrow \quad \bm{x}_\tau \in \mathcal{C} \quad \forall \tau > t. 
    \end{equation}
\end{definition}
Ensuring forward invariance is crucial in safety-critical applications, as it guarantees that the system remains within the set $\mathcal{C}$ at all times. We consider a region of the state space as safe if all trajectories starting within it remain within an allowable region\,---\,i.e., a set of states considered instantaneously admissible\,---\,for all future times. To ensure that the computed invariant set $\mathcal{C}$ is safe in this sense, we impose the constraint
\begin{equation} \label{eq:set_conditions}
    \mathcal{X}_{\text{min}} \subset \mathcal{C} \subseteq \mathcal{X}_{\text{max}}.
\end{equation}
\begin{figure}[t]
	\pgfplotsset{width=6cm, compat = 1.18, 
		height = 6cm, grid= major, 
		legend cell align = left, ticklabel style = {font=\scriptsize},
		every axis label/.append style={font=\scriptsize},
		legend style = {font=\scriptsize}}		   
    	\centering
	    \begin{tikzpicture}        
		  \begin{axis}[
    			grid=none,
    			xmin=-3, xmax=3,
    			ymin=-3, ymax=3,
    			ylabel=$x_2$, xlabel=$x_1$,
    			set layers=standard,
    			reverse legend,
                    axis x line = none, 
                    axis y line = none,            
    			legend style={font=\scriptsize, at={(1,1)},anchor=north east, row sep=2pt}, ylabel shift = -6 pt]                     
                \node[circle, draw, text=black, thick, gray, color=black, fill = lightgray!70, minimum size = 3.75cm] (c) at (0,0){};           
                \def\file{data/C_illustration.txt}	
                \addplot[dashed,very thick, OrangeRed] table[x=x_1,y=x_2]{\file};
                \node[circle,draw,minimum size = 1.5cm, thick, text=black,color=OliveGreen, fill = OliveGreen!20] (c) at (0,0){$\mathcal{X}_\text{min}$};           
                \node[text width = 3cm, text = OrangeRed] (label) at (2.4cm,2.8cm) {$\mathcal{C}$};
                \node[text width = 3cm, text = black] (label) at (3.5cm,3.5cm) {$\mathcal{X}_\text{max}$};    	
		  \end{axis}          
	   \end{tikzpicture}
    \vspace{-0.1cm}
    \caption{Illustration of the constraint $\mathcal{X}_{\text{min}} \subset \mathcal{C} \subseteq \mathcal{X}_{\text{max}}$. The outer set $\mathcal{X}_{\text{max}}$ defines the allowable region for the forward-invariant set $\mathcal{C}$, while the inner set $\mathcal{X}_{\text{min}}$ specifies states that must be included in $\mathcal{C}$.}
    \label{fig:sethierarchy}
    \vspace*{-0.5cm}
\end{figure}
Here, $\mathcal{X}_{\text{max}}$ is a predefined closed and compact region of allowable states. The condition $\mathcal{C} \subseteq \mathcal{X}_{\text{max}}$ restricts the search space to this region, ensuring that the computed forward-invariant set does not extend beyond $\mathcal{X}_{\text{max}}$. The nonempty, closed, and compact set $\mathcal{X}_{\text{min}}$ defines the core region of the invariant set, specifying states that must be included, such as critical initial conditions or operating points where the system is required to remain. The set hierarchy is illustrated in Figure~\ref{fig:sethierarchy}.

Due to uncertainty in the system dynamics, guaranteeing strict invariance with probability $1$ is generally infeasible. Instead, the goal is to compute a set $\mathcal{C}$ and certify that it is forward invariant with high probability. Specifically, we aim to derive guarantees of the form
\begin{equation} 
	\mathbb{P}(\bm{x}_t \in \mathcal{C} \Rightarrow \bm{x}_\tau \in \mathcal{C}\ \forall \tau > t) \geq 1-\delta,
\end{equation}
where $\delta \in (0,1)$ denotes the maximum violation probability. This probabilistic formulation enables data-driven certification despite uncertainty about the system.

To address the challenge of computing such an invariant set despite unknown dynamics and the fact that only noisy output measurements are available, we first infer the posterior distribution over the system dynamics using a marginal Metropolis--Hastings sampler. Samples from this distribution represent plausible dynamics of the unknown system and are used to synthesize a barrier certificate via a sum-of-squares (SOS) program. The resulting certificate is then assessed on an independent set of posterior samples, yielding finite-sample probabilistic guarantees that the invariance property holds for the true, unknown system.
    \section{Marginal Metropolis--Hastings Sampling}
\label{sec:MMH}
In order to derive barrier certificates, it is essential to reduce uncertainty about the system\,---\,particularly the state transition function $\bm{f}_{\bm{\theta}}(\cdot)$\,---\,by incorporating information from the observed data $\mathbb{D}$. If the prior distribution exhibits high variance, the resulting uncertainty in the model parameters would otherwise prevent the derivation of barrier certificates or lead to overly conservative ones. In the Bayesian framework, uncertainty is reduced by inferring the posterior distribution $p(\bm{\theta} \mid \mathbb{D})$, which combines prior knowledge with evidence from the observed data. However, in the setting considered here, this posterior is intractable due to its dependence on the latent, i.e., not directly observable, state trajectory $\bm{x}_{0:T}$. This introduces a high-dimensional integral over possible state trajectories that cannot be solved in closed form \cite{andrieu2010}. To address this challenge, Markov chain Monte Carlo methods \cite{robert2004} can be used to draw samples from the joint posterior distribution of the parameters and latent states. This section introduces the sampling-based inference procedure used to obtain these posterior samples.

\begin{algorithm}[t]
	\caption{Marginal Metropolis Hastings sampler}\label{alg:MMH_sampler}
	\begin{algorithmic}[1]
		\renewcommand{\algorithmicrequire}{\textbf{Input:}}
		\renewcommand{\algorithmicensure}{\textbf{Output:}}
		\REQUIRE Observations $\mathbb{D}$, model $\bm{f}_{\bm{\theta}}(\cdot)$, $\bm{g}_{\bm{\theta}}(\cdot)$, $p_{\mathcal{W}}(\cdot \mid \bm{\theta})$, priors $p(\bm{\theta})$, $p(\bm{x}_{0})$, proposal distribution $q(\bm{\theta}', \bm{x}_0' \mid \bm{\theta}^{[k]}, \bm{x}_0^{[k]})$, initial values $\bm{\theta}^{[1]}$,  $\bm{x}_{0}^{[1]}$, number of samples $K$, burn-in period $K_b$, thinning parameter $k_d$
		\ENSURE $K$ samples from $p(\bm{\theta}, \bm{x}_{0:T} \mid \mathbb{D})$
		\STATE Initialize: $k \gets 1$
		\STATE Get $\bm{x}_{0:T}^{[1]}$ and compute likelihood using \eqref{eq:likelihood}
		\WHILE {$k \leq K_b + 1 + (K-1) (k_d+1)$}
		\STATE Propose $\bm{\theta}', \bm{x}_0' \sim q(\bm{\theta}', \bm{x}_0' \mid \bm{\theta}^{[k]}, \bm{x}_0^{[k]})$
		\STATE Perform forward simulation to obtain $\bm{x}_{0:T}'$
		\STATE Compute likelihood $p(\mathbb{D} \mid \bm{\theta}', \bm{x}_0')$ using \eqref{eq:likelihood}
		\STATE Compute acceptance ratio $\alpha$ using \eqref{eq:acceptance_probability}
		\IF{$u \sim \text{Uniform}(0, 1) < \alpha$}
		\STATE Accept proposal: $k \gets k + 1$, $\bm{\theta}^{[k]} \gets \bm{\theta}', \bm{x}_0^{[k]} \gets \bm{x}_0'$
		\ENDIF
		\ENDWHILE
		\STATE Discard burn-in and perform thinning
	\end{algorithmic}
\end{algorithm}

We propose using a marginal Metropolis--Hastings (MMH) sampler, as outlined in Algorithm~\ref{alg:MMH_sampler}, to draw samples from the joint posterior distribution $p(\bm{\theta}, \bm{x}_{0:T} \mid \mathbb{D})$. The general idea of MMH sampling is to propose new values for the uncertain quantities and either accept or reject them based on their likelihood, which reflects how well they explain the observed data. Since the latent state trajectory $\bm{x}_{0:T}$ is uniquely determined by the initial state $\bm{x}_0$ and the model parameters $\bm{\theta}$, it is sufficient to propose new values only for $\bm{\theta}$ and $\bm{x}_0$. These proposals are generated based on the last accepted sample $(\bm{\theta}^{[k]}, \bm{x}_0^{[k]})$ using a proposal distribution with pdf $q(\bm{\theta}', \bm{x}_0' \mid \bm{\theta}^{[k]}, \bm{x}_0^{[k]})$, which must cover the support of the prior distributions $p(\bm{\theta})$ and $p(\bm{x}_0)$ (line~4). To evaluate the likelihood of a proposed pair $(\bm{\theta}', \bm{x}_0')$, the corresponding state trajectory $\bm{x}_{0:T}'$ is determined by performing a forward simulation with the deterministic state transition function $\bm{f}_{\bm{\theta}'}(\cdot)$ starting from the initial state $\bm{x}_0'$ (line~5). The resulting state trajectory $\bm{x}_{0:T}'$ is then used to compute the likelihood of the observations (line~6) as
\begin{equation}\label{eq:likelihood}
	p\left( \mathbb{D}=\bm{y}_{0:T} \mid \bm{\theta}', \bm{x}_0' \right) = \prod_{t=0}^{T} p_{\mathcal{W}}\left(\bm{y}_t - \bm{g}_{\bm{\theta}'}(\bm{x}_t') \mid \bm{\theta}'\right).
\end{equation}
The acceptance probability $\alpha$ for a proposed pair $(\bm{\theta}', \bm{x}_0')$ (line~7) is given by \cite{robert2004}
\begin{equation}\label{eq:acceptance_probability}
    \min\left(1, \frac{p(\mathbb{D} \mid \bm{\theta}', \bm{x}_0') p(\bm{\theta}', \bm{x}_0') q(\bm{\theta}^{[k]}, \bm{x}_0^{[k]} \mid \bm{\theta}', \bm{x}_0')}{p(\mathbb{D} \mid \bm{\theta}^{[k]}, \bm{x}_0^{[k]}) p(\bm{\theta}^{[k]},\bm{x}_0^{[k]}) q(\bm{\theta}', \bm{x}_0' \mid \bm{\theta}^{[k]}, \bm{x}_0^{[k]})}\right).
\end{equation}
This expression compares the likelihoods of the proposed and current values, weighted by their prior probabilities and the proposal distributions. This acceptance ratio ensures that proposals that better explain the observed data are more likely to be accepted. If a proposed pair $(\bm{\theta}', \bm{x}_0')$ is accepted, it replaces the current state in the Markov chain, and the new parameters are used as the basis for subsequent proposals (lines~8-10). Otherwise, the chain remains at the current state.

It can be shown under mild assumptions that the invariant distribution of the MMH sampler\,---\,i.e., the distribution to which the Markov chain converges asymptotically\,---\,is $p(\bm{\theta}, \bm{x}_{0:T} \mid \mathbb{D})$ \cite{tierney1994}. This ensures that the samples generated by the MMH sampler represent the true posterior distribution, provided that a sufficient number of samples are drawn. Since the first samples depend heavily on the initialization and may not accurately represent the target distribution, the first $K_b$ samples are discarded (burn-in period). Moreover, consecutive samples tend to be correlated, which is undesirable for the intended application. To mitigate this correlation and obtain approximately independent samples, $k_d$ samples are discarded between every pair of accepted samples. This approach, known as thinning, is a straightforward and effective way to reduce autocorrelation. The parameter $k_d$ should be chosen to balance efficiency and independence\,---\,small enough to avoid excessive discarding of samples but large enough to ensure low autocorrelation among the retained ones. 

\begin{remark}\label{rem:staged_sampler}
Directly sampling from the full posterior distribution can be challenging, particularly when the distribution is sharply concentrated or exhibits strong correlations between parameters. This issue frequently arises in state-space models, where parameters and latent states jointly influence the observations and are thus often strongly correlated. In such cases, random walk proposals tend to be inefficient, frequently resulting in low acceptance rates and high sample autocorrelation. An effective strategy to mitigate this problem is staged inference, in which the number of data points used in the likelihood estimation~\eqref{eq:likelihood} is gradually increased. At each stage, the proposal distribution~$q(\cdot)$ can be adapted based on the empirical covariance of the current samples, allowing the sampler to progressively track the evolving posterior geometry. This approach improves convergence and reduces sensitivity to the initial values~$\bm{\theta}^{[1]}$ and~$\bm{x}_0^{[1]}$. When such staged adaptation is restricted to the burn-in phase and the proposal distribution remains fixed during the final sampling, the theoretical convergence guarantees of the sampler are preserved.
\end{remark}

With an appropriately long burn-in period and effective thinning, the proposed MMH sampler produces nearly uncorrelated samples from the target distribution, which justifies the following assumption.

\begin{assumption}
\label{as:iid_posterior}
    The MMH sampler provides $K$ independent samples $\{\bm{\theta},\bm{x}_{0:T}\} ^{[1:K]}$ from the distribution $p(\bm{\theta}, \bm{x}_{0:T} \mid \mathbb{D})$.
\end{assumption}

Although the MMH sampler provides samples $\{ \bm{\theta}, \bm{x}_{0:T} \}^{[1:K]}$ from the joint posterior distribution over parameters and latent state trajectories, only the parameter samples are relevant for determining forward-invariant sets. In the remainder of this paper, we therefore consider only the set of parameter samples $\mathbb{S} = \{\bm{\theta}^{[1]},\dots,\bm{\theta}^{[K]}\}$, effectively marginalizing out the latent state trajectories.
    \section{Synthesis and Validation of Barrier Certificates}
\label{sec:barrier_certificates}
In this section, we show how barrier certificates with probabilistic performance guarantees for system~\eqref{eq:system} can be derived from the iid samples of the posterior distribution of model parameters generated by the MMH sampler.  
These samples represent possible realizations of the system dynamics and collectively capture the remaining uncertainty after updating the prior with observations $\mathbb{D}$\,---\,if the samples are highly diverse, the uncertainty is large; if they are similar, the uncertainty is small. For the purpose of barrier certificate synthesis and validation, the set of samples $\mathbb{S} = \{\bm{\theta}^{[1]}, \dots, \bm{\theta}^{[K]}\}$ is divided into two disjoint subsets: a training set $\mathbb{S}_{\text{train}}$ of size $K_{\text{train}}$, used to synthesize a barrier certificate, and a test set $\mathbb{S}_{\text{test}}$ of size $K_{\text{test}}=K-K_{\text{train}}$, used to validate the candidate and derive probabilistic guarantees. Section~\ref{subsec:general_idea} introduces the general idea of finding an invariant set using barrier function conditions. Section~\ref{subsec:SOS_implementation} explains how these conditions are implemented using sum-of-squares (SOS) programming to obtain a prospective barrier certificate based on the training set. Finally, Section~\ref{subsec:guarantees} derives finite-sample probabilistic guarantees for invariance based on the test set.

\subsection{Barrier Conditions for Invariant Sets}\label{subsec:general_idea}
The general idea of our approach is to find a continuous function $h:\mathbb{R}^{n_x} \rightarrow \mathbb{R}$ (the barrier function) whose zero superlevel set defines a forward-invariant region of the state space. Formally, consider any compact set $\mathcal{C}$ satisfying
\begin{subequations} \label{eq:invariant_set_condition}
    \begin{align}
        h(\bm{x}) &= 0 \quad \forall \bm{x} \in \partial \mathcal{C}, \\
        h(\bm{x}) &> 0 \quad \forall \bm{x} \in \operatorname{Int}(\mathcal{C}),
    \end{align}
\end{subequations} 
where $\partial \mathcal{C}$ denotes the boundary and $\operatorname{Int}(\mathcal{C})$ the interior of the set. If $h(\cdot)$ further satisfies the inequality
\begin{equation} \label{eq:barrier_condition}
    h(\bm{f}_{\bm{\theta}}(\bm{x})) - h(\bm{x}) + \alpha(h(\bm{x})) \geq 0
\end{equation}
for all $\bm{x} \in \mathcal{C}$, where $\alpha: [0,\infty) \rightarrow \mathbb{R}_{\geq 0}$ is a class $\mathcal{K}$ function\,---\,i.e., strictly increasing with $\alpha(0) = 0$\,---\,that additionally satisfies $\alpha(r) < r$ for all $r>0$, then $\mathcal{C}$ is forward invariant for the system~\eqref{eq:system} \cite[Theorem 1]{ahmadi2019}. This means that once a trajectory enters $\mathcal{C}$, it remains confined. The function $h(\cdot)$ serves as a barrier function, as it makes sure that trajectories do not cross the boundary of $\mathcal{C}$ from within\,---\,effectively repelling them from the border as dictated by the barrier function condition. Note that \eqref{eq:barrier_condition} alone does not guarantee the existence of a forward-invariant set $\mathcal{C}$ satisfying \eqref{eq:invariant_set_condition}. In principle, $h(\cdot)$ could be negative everywhere with $\mathcal{C}=\emptyset$. Therefore, we must explicitly enforce the nonemptiness of~$\mathcal{C}$.

While \eqref{eq:invariant_set_condition} and \eqref{eq:barrier_condition} provide sufficient conditions for a forward-invariant set, they cannot be utilized directly in our setting since the true state transition function $\bm{f}_{\bm{\theta}}(\cdot)$ is unknown. Instead, we aim to find a set $\mathcal{C}$ that is forward invariant for all samples in the training set. Specifically, we aim to find a continuous function $h(\cdot)$ such that
\begin{equation} \label{eq:sample_barrier_condition}
    h(\bm{f}_{\tilde{\bm{\theta}}}(\bm{x})) - h(\bm{x}) + \alpha(h(\bm{x})) \geq 0 \ \forall \bm{x} \in \mathcal{X}_{\text{max}},\ \forall \tilde{\bm{\theta}} \in \mathbb{S}_{\text{train}}.
\end{equation}
The general idea is that with a sufficient number of samples, a robust certificate can be expected that is likely to hold for the unknown dynamics \eqref{eq:system}.

\begin{remark}
While, in theory, the inequality \eqref{eq:barrier_condition} is required to hold only on the (unknown) set $\mathcal{C}$, enforcing it solely on this set would require explicit knowledge of $\mathcal{C}$, which leads to an intractable problem. Approaches such as the one presented in \cite{schneeberger2023} address this challenge using alternating optimization schemes, which require a good initial guess for the unknown quantities\,---\,information that is unavailable in our setting due to uncertainty in the dynamics. Therefore, in \eqref{eq:sample_barrier_condition}, we require the inequality to hold on the known superset $\mathcal{X}_{\text{max}}$, introducing some conservatism but retaining tractability.
\end{remark}

Next, we describe how these conditions are translated into a computational framework based on sum-of-squares (SOS) programming.

\subsection{Sum-of-Squares Implementation}\label{subsec:SOS_implementation}
In order to formalize the search for a valid barrier function $h(\bm{x})$ satisfying \eqref{eq:sample_barrier_condition}, we employ sum-of-squares (SOS) programming \cite{papachristodoulou2005}, a convex optimization framework that allows for the efficient verification of polynomial inequalities using semidefinite programming. A polynomial function $p(\bm{x})$ is a sum-of-squares (``is SOS'') if there exist polynomials $q_i(\bm{x})$ such that $p(\bm{x}) = \sum_i q_i^2(\bm{x})$ which guarantees non-negativity of $p(\bm{x})$, i.e., $p(\bm{x})\geq 0$ for all $ \bm{x} \in \mathbb{R}^{n_x}$. The key advantage of SOS programming is that it provides a tractable method to certify polynomial inequalities\,---\,such as the barrier function condition \eqref{eq:barrier_condition}\,---\,by formulating them as SOS constraints. Specifically, we assume $h(\cdot)$ is a polynomial function of a given degree. For $\alpha(\cdot)$, we choose the polynomial function $\alpha(\bm{x}) = \gamma \bm{x}$ with $\gamma \in (0,1)$. This allows us to express \eqref{eq:sample_barrier_condition} as the following SOS constraint
\begin{equation} \label{eq:SOS_barrier_condition}
h(\bm{f}_{\tilde{\bm{\theta}}}(\bm{x}))-h(\bm{x})+\gamma h(\bm{x})\text{ is SOS } \forall \tilde{\bm{\theta}} \in \mathbb{S}_{\text{train}}.
\end{equation}
However, this formulation imposes the inequality in \eqref{eq:sample_barrier_condition} over the entire state space, i.e., for all $\bm{x} \in \mathbb{R}^{n_x}$. When the barrier function $h(\cdot)$ has a nonempty and compact zero superlevel set $\mathcal{C}$, verifying this inequality globally is significantly more challenging than doing so locally. Thus, directly enforcing \eqref{eq:SOS_barrier_condition} is overly restrictive and may lead to infeasibility.

To relax the SOS constraint \eqref{eq:SOS_barrier_condition} to only hold within the admissible region $\mathcal{X}_{\text{max}}$, we assume knowledge of polynomial functions $r_\text{min},r_\text{max}: \mathbb{R}^{n_x} \rightarrow \mathbb{R}$, which are nonnegative on $\mathcal{X}_\text{min}$ and $\mathcal{X}_\text{max}$, respectively. Ideally, $r_\text{min}(\cdot)$ and $r_\text{max}(\cdot)$ are designed to become negative outside the respective regions. A typical choice for a circular set with radius $\rho$ is $r(\bm x) := \rho^2-\bm x^{\mathsf T}\bm x$. Assuming knowledge of these functions is not restrictive in practice, as polynomial functions are highly expressive. To enforce the condition \eqref{eq:sample_barrier_condition} only within the admissible region $\mathcal{X}_{\text{max}}$, we adopt a standard relaxation and modify the SOS constraint \eqref{eq:SOS_barrier_condition} to
\begin{equation} \label{eq:SOS_barrier_condition_relaxed}
h(\bm{f}_{\tilde{\bm{\theta}}}(\bm{x}))-h(\bm{x})+\gamma h(\bm{x}) - q_1(\bm{x})r_\text{max}(\bm{x})\text{ is SOS } \forall \tilde{\bm{\theta}} \in \mathbb{S}_{\text{train}},
\end{equation}
where $q_1:\mathbb{R}^{n_x} \rightarrow \mathbb{R}$ is an SOS polynomial of a given degree, ensuring $q_1(\bm{x})\geq 0\ \forall\bm{x} \in \mathbb{R}^{n_x}$, and $r_\text{max}(\bm x)\geq 0\ \forall\bm x\in \mathcal{X}_\text{max}$. Compared to the original constraint \eqref{eq:SOS_barrier_condition}, this relaxed formulation enforces the inequality \eqref{eq:sample_barrier_condition} only within $\mathcal{X}_{\text{max}}$ while allowing flexibility outside this region. The additional SOS polynomial $q_1(\cdot)$ acts as a non-negative multiplier, increasing the set of feasible solutions and improving the solvability of the overall SOS program.

To enforce the existence of a forward-invariant set $\mathcal{C}$ that satisfies \eqref{eq:set_conditions}, i.e., $\mathcal{X}_{\text{min}} \subset \mathcal{C} \subseteq \mathcal{X}_{\text{max}}$, we explicitly incorporate this constraint into the SOS formulation. For this, we assume that the sets $\mathcal{X}_{\text{min}}$ and $\mathcal{X}_{\text{max}}$ can be represented using polynomial functions. Specifically, we assume knowledge of polynomial functions $h_{\text{min}}: \mathbb{R}^{n_x} \rightarrow \mathbb{R}$ and $h_{\text{max}}: \mathbb{R}^{n_x} \rightarrow \mathbb{R}$ satisfying
\begin{subequations} \label{eq:set_polynomials}
    \begin{align}
        h_{\text{min}}(\bm x) &= 0  \quad \forall \bm{x} \in \partial \mathcal{X}_{\text{min}}, \label{eq:set_polynomials_2}\\
        h_{\text{max}}(\bm x) &= 0  \quad \forall \bm{x} \in \partial \mathcal{X}_{\text{max}}.\label{eq:set_polynomials_3}
    \end{align}
\end{subequations}
Again, this assumption is not restrictive as polynomial functions are highly expressive, and there is typically some flexibility in choosing $\mathcal{X}_{\text{min}}$ and $\mathcal{X}_{\text{max}}$ as they merely define the search space. The following lemma provides a constructive way to enforce that the barrier function has a nonempty zero superlevel set $\mathcal{C}$ satisfying~\eqref{eq:set_conditions}, using polynomial inequalities involving $h_{\text{min}}(\cdot)$ and $h_{\text{max}}(\cdot)$.
\begin{lemma}\label{lem:existence_conditions}
    If there exists a polynomial function $h(\cdot)$ satisfying the inequality \eqref{eq:barrier_condition} for all $\bm{x} \in \mathcal{X}_{\text{max}}$, and polynomial functions $h_{\text{min}}(\cdot)$ and $h_{\text{max}}(\cdot)$ satisfying \eqref{eq:set_polynomials}, such that
    \begin{subequations}\label{eq:existence_condition}
        \begin{align}
             &h(\bm{x})>h_{\text{min}}(\bm{x}) \quad \forall \bm{x} \in \mathcal{X}_{\text{min}},\label{eq:emptyset_condition}\\
            &h(\bm x)\leq h_{\text{max}}(\bm{x}) \quad \forall \bm{x} \in \mathcal{X}_{\text{max}},\label{eq:boundedness_condition}
        \end{align}
    \end{subequations}
    then, there exists the set 
    \begin{equation}
    \mathcal{C}:=\{x \in \mathcal{X}_{\text{max}} \mid h(\bm{x})\geq 0\} \cup \mathcal{X}_{\text{min}}
    \end{equation}
    that is forward invariant and satisfies $\mathcal{X}_\text{min}\subset \mathcal{C} \subseteq \mathcal{X}_{\text{max}}$. 
\end{lemma}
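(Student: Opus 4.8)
My plan is to verify the three assertions of the lemma — $\mathcal{C}\subseteq\mathcal{X}_{\text{max}}$, the strict inclusion $\mathcal{X}_{\text{min}}\subset\mathcal{C}$, and forward invariance — in that order, the last being the crux. The first is immediate from the definition of $\mathcal{C}$ as a union of two subsets of $\mathcal{X}_{\text{max}}$; along the way I also record that $\mathcal{C}$ is closed, being built from the closed sets $\{h\ge 0\}$, $\mathcal{X}_{\text{max}}$, $\mathcal{X}_{\text{min}}$ by finite intersection and union, hence compact as a subset of the compact $\mathcal{X}_{\text{max}}$ — a fact needed to invoke \cite[Theorem~1]{ahmadi2019}.

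For $\mathcal{X}_{\text{min}}\subseteq\mathcal{C}$ the union makes the inclusion trivial, so the content is the strictness. I would first combine \eqref{eq:emptyset_condition} with the boundary property $h_{\text{min}}=0$ on $\partial\mathcal{X}_{\text{min}}$ from \eqref{eq:set_polynomials_2}, and with the nonnegativity of $h_{\text{min}}$ on $\mathcal{X}_{\text{min}}$ (which it represents), to obtain $h>0$ on all of $\mathcal{X}_{\text{min}}$; in particular $\mathcal{X}_{\text{min}}\subseteq\{\bm{x}\in\mathcal{X}_{\text{max}}\mid h(\bm{x})\ge 0\}$, so the union with $\mathcal{X}_{\text{min}}$ in the definition of $\mathcal{C}$ is in fact redundant, though it usefully exhibits $\mathcal{C}\neq\emptyset$. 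Comparing $h>0$ on $\mathcal{X}_{\text{min}}$ with \eqref{eq:boundedness_condition} and $h_{\text{max}}=0$ on $\partial\mathcal{X}_{\text{max}}$ (from \eqref{eq:set_polynomials_3}) shows $\mathcal{X}_{\text{min}}$ cannot meet $\partial\mathcal{X}_{\text{max}}$, i.e.\ $\mathcal{X}_{\text{min}}\subseteq\operatorname{Int}(\mathcal{X}_{\text{max}})$. Then, at any point of the nonempty set $\partial\mathcal{X}_{\text{min}}$, continuity of $h$ supplies an open ball on which $h>0$ and which lies in $\operatorname{Int}(\mathcal{X}_{\text{max}})$; such a ball must contain a point outside $\mathcal{X}_{\text{min}}$, and that point lies in $\mathcal{C}\setminus\mathcal{X}_{\text{min}}$.

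The core is forward invariance. Since the state evolution in \eqref{eq:system} is autonomous and deterministic, by Definition~\ref{def:forward_invariant} it suffices to prove the one-step implication $\bm{x}\in\mathcal{C}\Rightarrow\bm{f}_{\bm{\theta}}(\bm{x})\in\mathcal{C}$ and then induct on time. Fix $\bm{x}\in\mathcal{C}$: if $\bm{x}\in\mathcal{X}_{\text{min}}$ then $h(\bm{x})>h_{\text{min}}(\bm{x})\ge 0$, and otherwise $h(\bm{x})\ge 0$ by definition of $\mathcal{C}$; either way $\bm{x}\in\mathcal{X}_{\text{max}}$ with $h(\bm{x})\ge 0$. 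Applying \eqref{eq:barrier_condition} at $\bm{x}$ — and recalling that $\alpha$ is class $\mathcal{K}$ with $\alpha(s)<s$ for $s>0$ and $\alpha(0)=0$, so $s-\alpha(s)\ge 0$ for every $s\ge 0$ — gives $h(\bm{f}_{\bm{\theta}}(\bm{x}))\ge h(\bm{x})-\alpha(h(\bm{x}))\ge 0$. It remains to show $\bm{f}_{\bm{\theta}}(\bm{x})\in\mathcal{X}_{\text{max}}$, for then $\bm{f}_{\bm{\theta}}(\bm{x})\in\{\bm{y}\in\mathcal{X}_{\text{max}}\mid h(\bm{y})\ge 0\}\subseteq\mathcal{C}$. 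This is exactly where \eqref{eq:boundedness_condition} and \eqref{eq:set_polynomials_3} enter: they force $h\le 0$ on $\partial\mathcal{X}_{\text{max}}$, so the zero-superlevel set of $h$ cannot cross $\partial\mathcal{X}_{\text{max}}$, and since $h(\bm{f}_{\bm{\theta}}(\bm{x}))\ge 0$ the successor is trapped inside $\mathcal{X}_{\text{max}}$. Equivalently — and perhaps more cleanly — one checks that $\mathcal{C}$ is a compact set of the form \eqref{eq:invariant_set_condition}: $h$ vanishes on $\partial\mathcal{C}$ (using $\partial\{h\ge 0\}\subseteq\{h=0\}$ together with the fact that the part of $\partial\mathcal{X}_{\text{max}}$ meeting $\mathcal{C}$ has both $h\le 0$ and $h\ge 0$) and $h>0$ on $\operatorname{Int}(\mathcal{C})$, so that \cite[Theorem~1]{ahmadi2019}, whose hypothesis \eqref{eq:barrier_condition} we have on the superset $\mathcal{X}_{\text{max}}\supseteq\mathcal{C}$, applies directly and yields forward invariance.

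I expect the genuinely delicate step to be this trapping argument — showing the successor cannot escape $\mathcal{X}_{\text{max}}$, equivalently pinning down $\partial\mathcal{C}$: the barrier inequality constrains only the \emph{value} of $h$ at $\bm{f}_{\bm{\theta}}(\bm{x})$, not its \emph{location}, so one must argue that the connected component of $\{h\ge 0\}$ through $\bm{x}$ stays inside $\mathcal{X}_{\text{max}}$. This is precisely the boundedness role of \eqref{eq:boundedness_condition}, and the point at which a mild representability assumption on $h_{\text{max}}$ (namely $h_{\text{max}}<0$ outside $\mathcal{X}_{\text{max}}$, so that $h\le h_{\text{max}}$ there forces $h<0$) makes the conclusion airtight. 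Everything else — the two inclusions, the elementary bound $s-\alpha(s)\ge 0$, closedness and compactness of $\mathcal{C}$, and the induction on time — is routine.
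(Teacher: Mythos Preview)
Your proposal is correct and follows essentially the same route as the paper: combine \eqref{eq:set_polynomials} with \eqref{eq:existence_condition} to control $h$ on the two boundaries, then invoke \cite[Theorem~1]{ahmadi2019}. The paper organizes it slightly differently --- it first constructs an auxiliary compact $\tilde{\mathcal{C}}\supseteq\partial\mathcal{X}_{\text{min}}$ satisfying \eqref{eq:invariant_set_condition}, applies the cited theorem to $\tilde{\mathcal{C}}$, and then forms $\mathcal{C}=\mathcal{X}_{\text{min}}\cup\tilde{\mathcal{C}}$ --- whereas you work directly with $\mathcal{C}$ and, using the natural but unstated assumption $h_{\text{min}}\ge 0$ on $\mathcal{X}_{\text{min}}$, obtain $h>0$ on all of $\mathcal{X}_{\text{min}}$, rendering the union redundant and sidestepping the paper's closing assertion that invariance of $\tilde{\mathcal{C}}$ passes to the union. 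Your explicit isolation of the trapping step is sharper than the paper's treatment, which simply defers it to the cited theorem; one small correction on your proposed patch, though: assuming $h_{\text{max}}<0$ outside $\mathcal{X}_{\text{max}}$ does not by itself force $h<0$ there, since \eqref{eq:boundedness_condition} only bounds $h$ on $\mathcal{X}_{\text{max}}$, not beyond it.
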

\begin{proof}
If the inequality \eqref{eq:barrier_condition} is satisfied for all $\bm{x} \in \mathcal{X}_{\text{max}}$, then any compact subset of $\mathcal{X}_{\text{max}}$ fulfilling \eqref{eq:invariant_set_condition} is forward invariant \cite[Theorem 1]{ahmadi2019}. Condition \eqref{eq:boundedness_condition} combined with \eqref{eq:set_polynomials_3} prevents any set $\mathcal{C}$ intersecting $\partial\mathcal{X}_\text{max}$ from extending beyond $\mathcal{X}_\text{max}$, as this would contradict \eqref{eq:invariant_set_condition}. Similarly, condition \eqref{eq:emptyset_condition} along with \eqref{eq:set_polynomials_2} guarantees that $h(\bm{x})>0\ \forall\bm x \in \partial \mathcal{X}_{\text{min}}$. By continuity of the polynomial function $h(\cdot)$, this implies that there exists a set $\tilde{\mathcal{C}} \subseteq \mathcal{X}_{\text{max}}$, containing  $\partial \mathcal{X}_{\text{min}}$ and satisfying \eqref{eq:invariant_set_condition}.
Since $\tilde{\mathcal{C}}$ is forward invariant and contains $\partial\mathcal{X}_{\text{min}}$, the union $\mathcal{C}:= \mathcal{X}_{\text{min}} \cup \tilde{\mathcal{C}}$ is also forward invariant and satisfies $\mathcal{X}_\text{min}\subset \mathcal{C} \subseteq \mathcal{X}_{\text{max}}$.
\end{proof}

Since the inequalities \eqref{eq:existence_condition} can be naturally expressed as SOS constraints, Lemma~\ref{lem:existence_conditions} enables us to enforce the existence of a forward-invariant set satisfying \eqref{eq:set_conditions} within the SOS framework. Similarly to before, we employ the relaxation functions $r_\text{min}(\cdot)$ and $r_\text{max}(\cdot)$ to ensure that the inequalities are only enforced locally. We thus obtain the following SOS program: 
\begin{subequations} \label{eq:SOS_program}
    \begin{align}
        \text{find}\ h(\cdot),\ q_1(\cdot),\ q_2(\cdot),\ q_3(\cdot), \ \text{s.t.} \ \forall \tilde{\bm{\theta}}& \in \mathbb{S}_{\text{train}},\nonumber\\
        h(\bm{f}_{\tilde{\bm{\theta}}}(\bm{x}))-h(\bm{x})+\gamma h(\bm{x}) - q_1(\bm{x})r_\text{max}(\bm{x})&\text{ is SOS}, \label{eq:SOS_program_1}\\
        h(\bm{x}) - h_{\text{min}}(\bm{x}) - \epsilon - q_2(\bm{x})r_\text{min}(\bm{x})&\text{ is SOS}, \label{eq:SOS_program_2} \\
        h_{\text{max}}(\bm{x}) - h(\bm{x}) - q_3(\bm{x})r_\text{max}(\bm{x})&\text{ is SOS}, \label{eq:SOS_program_3}
    \end{align}
\end{subequations}
where $\epsilon>0$ is a small constant. This constant is necessary because SOS constraints certify non-negativity but not strict positivity. The SOS feasibility problem \eqref{eq:SOS_program} can then be solved using semidefinite programming, yielding the coefficients of the polynomial functions $h(\cdot)$, $q_1(\cdot)$, $q_2(\cdot)$, and $q_3(\cdot)$. If a feasible solution is found, then the nonempty set
\begin{equation}\label{eq:set_C}
\mathcal{C}:=\{ \bm{x} \in \mathcal{X}_{\text{max}}\mid h(\bm{x}) \geq 0\} \cup \mathcal{X}_{\text{min}}
\end{equation} 
that is forward invariant for all samples in the training set is guaranteed to exist.
\begin{remark}
    Infeasibility of the SOS program~\eqref{eq:SOS_program} does not necessarily imply that the true system lacks a forward-invariant set within the admissible region. A fundamental limitation of the proposed approach is that the SOS framework can only capture invariant sets that admit a barrier function $h(\cdot)$ representable by a finite-degree polynomial, and some sets may lie outside this representation class. Even if such a polynomial barrier function exists in principle, the chosen degree may be insufficient to capture the invariant set. In practice, the degree should therefore be selected as high as computational resources allow, since higher degrees increase expressiveness without introducing a risk of overfitting, due to the large number of sample-based constraints that must be satisfied. Moreover, unsuitable structural choices for $h_{\text{min}}(\cdot)$ or $h_{\text{max}}(\cdot)$ may overly restrict the feasible search space and thereby lead to infeasibility. Finally, infeasibility may also result from high uncertainty about the unknown system: if the posterior samples vary significantly, it may be impossible to find a single $h(\cdot)$ satisfying the SOS condition~\eqref{eq:SOS_program_1} for all of them. In this case, infeasibility reflects the conservatism of enforcing invariance across all training samples. This conservatism can be alleviated by reducing the training set size, while the underlying uncertainty can be mitigated by collecting more data~$\mathbb{D}$ or adopting a more informative prior distribution.
\end{remark}
\begin{remark}\label{rem:circular_searchspace}
    In certain cases, it is possible to include the search for suitable functions $h_{\text{min}}(\cdot)$ and $h_{\text{max}}(\cdot)$ directly within the SOS program \eqref{eq:SOS_program}. Since these functions define the admissible region for $h(\cdot)$ via constraints \eqref{eq:SOS_program_2} and \eqref{eq:SOS_program_3}, allowing them to be partially optimized increases the space of feasible solutions and can significantly improve the solvability of the SOS program. To this end, one must define a structural form for $h_{\text{min}}(\cdot)$ and $h_{\text{max}}(\cdot)$ that guarantees satisfaction of the conditions in \eqref{eq:set_polynomials}, without fixing the functions entirely. This is possible in certain structured cases. For example, if the sets $\mathcal{X}_{\text{min}}$ and $\mathcal{X}_{\text{max}}$ are Euclidean balls centered at the origin with radii $\rho_{\text{min}}$ and $\rho_{\text{max}}$, respectively, the functions can be defined as 
    \begin{subequations} \label{eq:circular_searchspace}
    \begin{align}
        h_{\text{min}}(\bm{x}) &:= q_4(\bm{x}^{\mathsf T}\bm{x}) - q_4(\rho_{\text{min}}^2) \\
        h_{\text{max}}(\bm{x}) &:= q_5(\bm{x}^{\mathsf T}\bm{x}) - q_5(\rho_{\text{max}}^2)
    \end{align}
    \end{subequations}
    where $q_4(\cdot)$ and $q_5(\cdot)$ are univariate polynomials of a chosen degree. The structure in \eqref{eq:circular_searchspace} guarantees that both functions are zero on the boundary of $\mathcal{X}_{\text{min}}$ and $\mathcal{X}_{\text{max}}$, respectively, while providing the solver with the flexibility to shape the functions through the coefficients of $q_4(\cdot)$ and $q_5(\cdot)$, thereby improving the feasibility of the overall program. Analogously, if $\mathcal{X}_{\text{min}}$ or $\mathcal{X}_{\text{max}}$ are ellipsoids, the search for $h_{\text{min}}(\cdot)$ or $h_{\text{max}}(\cdot)$ can be incorporated into the SOS program by setting $\rho=1$ and replacing $\bm{x}^{\mathsf T}\bm{x}$ with $\bm{x}^{\mathsf T}\bm{Q}\bm{x}$, where $\bm{Q}$ is a positive definite matrix that captures the size and orientation of the respective set.
\end{remark}

\subsection{Validation and Guarantees}\label{subsec:guarantees}
In the following, we derive probabilistic guarantees that the barrier function candidate obtained by solving the SOS program~\eqref{eq:SOS_program} also holds for the true, unknown system. To assess generalization beyond the samples in the training set, we evaluate the candidate on a separate test set $\mathbb{S}_{\text{test}}$ of size $K_{\text{test}}$. Since only constraint~\eqref{eq:SOS_program_1} depends on the system parameters, it is re-evaluated for each sample in the test set. The barrier function candidate is valid for a sample $\tilde{\bm{\theta}}$ if
\begin{equation} \label{eq:validation}
    h\left(\bm{f}_{\tilde{\bm{\theta}}}(\bm{x})\right) - h(\bm{x}) + \gamma h(\bm{x}) - q_1(\bm{x}) r(\bm{x}) \text{ is SOS}.
\end{equation}
The following theorem establishes probabilistic guarantees on forward invariance of the set~\eqref{eq:set_C}, induced by the barrier function candidate, for the true, unknown system~\eqref{eq:system}.

\begin{theorem}\label{theo:robustness_guarantees}
    Suppose that the SOS program \eqref{eq:SOS_program} is feasible, yielding a barrier function candidate $h(\cdot)$, and that Assumptions~\ref{as:prior} and \ref{as:iid_posterior} hold. Let $V$ be the number of violations of condition~\eqref{eq:validation} observed among the $K_{\text{test}}$ samples in $\mathbb{S}_{\text{test}}$, and let $\beta \in (0,1)$ be a chosen confidence level. Then, with confidence at least $1-\beta$, the associated set $\mathcal{C}$, defined by~\eqref{eq:set_C}, is forward invariant for the true, unknown system~\eqref{eq:system} with probability at least $1 - \delta$, where $\delta \in (0,1)$ is the unique solution to
    \begin{equation}\label{eq:clopper-pearson}
        \sum_{i=0}^V \binom{K_{\text{test}}}{i}\delta^i(1-\delta)^{K_{\text{test}}-i}=\beta.
    \end{equation}
\end{theorem}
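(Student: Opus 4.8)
The plan is to combine the deterministic guarantee of Lemma~\ref{lem:existence_conditions} with a Bayesian exchangeability argument and the standard coverage property of the one-sided Clopper--Pearson confidence interval for a binomial proportion.

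First I would isolate the deterministic content. If the expression in \eqref{eq:validation} is a sum of squares for a particular parameter value $\bm{\theta}$, then $h(\bm{f}_{\bm{\theta}}(\bm{x})) - h(\bm{x}) + \gamma h(\bm{x}) \geq 0$ for all $\bm{x} \in \mathcal{X}_{\text{max}}$, so $h(\cdot)$ satisfies the barrier condition \eqref{eq:barrier_condition} with $\alpha(r)=\gamma r$ for the dynamics $\bm{f}_{\bm{\theta}}$. Moreover, feasibility of \eqref{eq:SOS_program} already supplies constraints \eqref{eq:SOS_program_2} and \eqref{eq:SOS_program_3}, which do not involve $\bm{\theta}$ and which imply conditions \eqref{eq:emptyset_condition} and \eqref{eq:boundedness_condition}. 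Hence Lemma~\ref{lem:existence_conditions} applies, and the set $\mathcal{C}$ of \eqref{eq:set_C} is forward invariant for the system with transition map $\bm{f}_{\bm{\theta}}$. In particular, writing $\bm{\theta}^{\star}$ for the true parameter of system~\eqref{eq:system}, if \eqref{eq:validation} holds at $\bm{\theta}^{\star}$ then $\mathcal{C}$ is forward invariant for the true system.

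Next I would set up the probabilistic model. By Assumption~\ref{as:prior}, $\bm{\theta}^{\star}$ is a draw from $p(\bm{\theta})$, so conditioned on $\mathbb{D}$ its law is the posterior $p(\bm{\theta}\mid\mathbb{D})$. By Assumption~\ref{as:iid_posterior}, the elements of $\mathbb{S}_{\text{test}}$ are iid draws from this same posterior and, being produced by the sampler rather than by the physical process, are\,---\,given $\mathbb{D}$\,---\,independent of $\bm{\theta}^{\star}$ and of the training set $\mathbb{S}_{\text{train}}$, hence of the candidate $(h,q_1,q_2,q_3)$ returned by \eqref{eq:SOS_program}. Conditioning on $(\mathbb{D},\mathbb{S}_{\text{train}})$, define the posterior violation probability
\[
p_{v} := \mathbb{P}\bigl(\text{\eqref{eq:validation} fails at }\bm{\theta}\ \big|\ \mathbb{D},\,\mathbb{S}_{\text{train}}\bigr),\qquad \bm{\theta}\sim p(\bm{\theta}\mid\mathbb{D}).
\]
By the first paragraph, $\mathbb{P}(\mathcal{C}\text{ not forward invariant for the true system}\mid\mathbb{D},\mathbb{S}_{\text{train}})\leq p_{v}$, while the number of test-set violations satisfies $V\sim\mathrm{Binomial}(K_{\text{test}},p_{v})$ conditionally on $(\mathbb{D},\mathbb{S}_{\text{train}})$.

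Finally I would invoke the Clopper--Pearson bound. For $0\le V<K_{\text{test}}$ the left-hand side of \eqref{eq:clopper-pearson}, regarded as a function of $\delta$, is the binomial cumulative distribution function, which is continuous and strictly decreasing on $(0,1)$ from $1$ to $0$; hence the root $\delta=\delta(V)$ exists and is unique. The coverage property of the one-sided Clopper--Pearson interval\,---\,a direct consequence of the monotonicity of the binomial CDF in its success probability\,---\,gives $\mathbb{P}(p_{v}\leq\delta(V)\mid\mathbb{D},\mathbb{S}_{\text{train}})\geq 1-\beta$. On the event $\{p_{v}\leq\delta(V)\}$ one then has $\mathbb{P}(\mathcal{C}\text{ forward invariant for the true system}\mid\mathbb{D},\mathbb{S}_{\text{train}})\geq 1-p_{v}\geq 1-\delta(V)$, which is the claim; as the argument holds for every realization of $(\mathbb{D},\mathbb{S}_{\text{train}})$, it holds unconditionally.

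The step I expect to be the main obstacle is the second one: making rigorous that $\bm{\theta}^{\star}$ and the posterior samples in $\mathbb{S}_{\text{test}}$ are, given $\mathbb{D}$, iid draws from one and the same distribution, so that $V$ is genuinely $\mathrm{Binomial}(K_{\text{test}},p_{v})$ and the Clopper--Pearson coverage bound transfers verbatim. This is precisely where the prior of Assumption~\ref{as:prior} is indispensable, and it is also where one must keep the two sources of randomness separate\,---\,the confidence level $1-\beta$ quantifies fluctuations of the test set, whereas the probability $1-\delta$ refers to the posterior law of $\bm{\theta}^{\star}$ given the data.
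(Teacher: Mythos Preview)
Your proof is correct and takes essentially the same approach as the paper: both argue that the true parameter and the test samples are, given the data, draws from the same posterior, so that $V\sim\mathrm{Binomial}(K_{\text{test}},\nu)$ for the posterior violation probability~$\nu$, and then invoke the one-sided Clopper--Pearson coverage bound. Your version is more detailed\,---\,explicitly appealing to Lemma~\ref{lem:existence_conditions} to pass from \eqref{eq:validation} to forward invariance of~$\mathcal{C}$, spelling out the conditioning on $(\mathbb{D},\mathbb{S}_{\text{train}})$, and verifying uniqueness of~$\delta$\,---\,but the logical structure is identical to the paper's.
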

\begin{proof}
    Assumption~\ref{as:iid_posterior} ensures that the samples in $\mathbb{S}_{\text{test}}$ are drawn independently from the posterior distribution $p(\bm{\theta}\mid\mathbb{D})$. By Assumption~\ref{as:prior}, the true, unknown system dynamics are also distributed according to this posterior and thus represent a potential (unseen) sample. Each test can therefore be interpreted as an independent Bernoulli trial with some unknown violation probability $\nu$, i.e., the probability that $h(\cdot)$ fails to satisfy condition~\eqref{eq:validation} under a posterior draw. Consequently, the number of observed violations $V$ among $K_{\text{test}}$ trials follows a binomial distribution with parameter $\nu$, i.e., $V \sim \mathrm{Binomial}(K_{\text{test}}, \nu)$.

    Based on this binomial model, the Clopper--Pearson upper $(1-\beta)$ confidence bound on $\nu$ \cite{clopper1934}, given by the unique solution $\delta$ of~\eqref{eq:clopper-pearson}, guarantees that with confidence at least $1-\beta$ the true violation probability satisfies $\nu \le \delta$. Accordingly, with the same confidence, the set $\mathcal{C}$ is forward invariant for the true system with probability at least $1-\delta$, which completes the proof.
\end{proof}

The lower bound on the probability that the set $\mathcal{C}$, induced by the barrier function candidate, is forward invariant for the true, unknown system, as established by Theorem~\ref{theo:robustness_guarantees}, depends on three factors: the number of test samples $K_{\text{test}}$, the number of observed violations $V$, and the confidence parameter $\beta$. For a fixed violation rate $V/K_{\text{test}}$, the lower bound $1-\delta$ increases as the number of test samples grows. Conversely, for a fixed test set size, observing fewer violations $V$ naturally yields a stronger guarantee \cite{clopper1934}.

Intuitively, the confidence parameter $\beta$ quantifies the probability of drawing an unrepresentative test set from the posterior distribution. Interpreted in the frequentist sense, this means that if the entire procedure (data collection, synthesis of $h(\cdot)$, and testing on $\mathbb{S}_{\text{test}}$) were to be repeated many times, the bound $1-\delta$ would hold in at least a $(1-\beta)$ fraction of repetitions. While essential for theoretical validity, the influence of $\beta$ is mild, since changing $\beta$ only slightly shifts the relevant Beta-distribution quantiles \cite{thulin2014}. Thus, in practice $\beta$ can be chosen very small (e.g., $\beta = 10^{-5}$), as the resulting change in $1-\delta$ is small compared to the effects of $K_{\text{test}}$ and $V$.

If the lower bound $1-\delta$ from Theorem~\ref{theo:robustness_guarantees} exceeds a desired threshold, the candidate function can be accepted. Otherwise, the entire procedure may be repeated with an enlarged training set. Since the barrier certificate must be valid for all samples in the training set, increasing the training set enforces validity over a larger portion of the posterior support and thereby enhances robustness. When repeating the procedure, however, the allocation of the confidence parameter $\beta$ must be handled carefully, as each repetition consumes confidence budget and requires an appropriate $\beta$-spending strategy \cite{hochberg1987}.
    \section{Simulation}
\label{sec:simulation}
In this section, we illustrate the proposed approach and its probabilistic guarantees using a numerical simulation\footnote{The code is available at \href{https://github.com/TUM-ITR/BABI}{\texttt{github.com/TUM-ITR/BABI}}.}. We consider a discretized FitzHugh–Nagumo system, a well-known two-dimensional nonlinear model originally developed to describe neuronal dynamics \cite{cebrian2024}. The system dynamics are given by
\begin{equation} \label{eq:fitzhugh-nagumo}
    f(\bm{x}_{t}, \bm{\theta}) = \bm{x}_t + \Delta t \cdot \begin{bmatrix}  
        x_{1,t} - x_{2,t} - x_{1,t}^3 \\  
        \theta_1 x_{1,t} + \theta_2 x_{2,t} + \theta_3
    \end{bmatrix},
\end{equation} 
where  $\Delta t = 0.1$ is the discretization time step and $\bm{\theta} = [0.01, -0.005, 0]^{\mathsf T}$ are unknown parameters. For this choice of parameters, the system exhibits a stable limit cycle, which must be fully contained in the forward-invariant set. This makes it a suitable test case to highlight the key features of the proposed approach. We assume that the observation function $g(\bm{x},u)=x_1$ and the distribution of the measurement noise $w_t \sim \mathcal{N}(0,0.1)$ are known. 

We generate $T=2000$ output measurements $y_{1:T}$ by simulating the system forward from an initial state drawn from a known distribution $\bm{x}_{0} \sim \mathcal{N}(\bm{0}, 0.1\cdot\bm{I}_2)$. As a prior for the unknown parameters, we use $p(\bm{\theta}) = \mathcal{N}(\bm{0},0.1 \cdot \bm{I}_3)$. 

To draw samples from the posterior distribution, we use a staged MMH sampler, as outlined in Remark~\ref{rem:staged_sampler}, which adaptively updates the Gaussian proposal distribution based on the empirical covariance of previously accepted samples. The initial covariance of the proposal distribution is set equal to the covariance of the prior for $\bm{\theta}$ and $\bm{x}_0$. In each stage, 10 additional data points are incorporated, and 500 samples are drawn with a burn-in period of $K_b=200$. Following standard procedure, we scale the sample covariance such that the acceptance rate remains around \SI{25}{\percent}, a commonly recommended value that promotes efficient mixing and reduces autocorrelation between samples \cite{gelman1997}.

First, we verify that Assumption~\ref{as:iid_posterior} is satisfied. For this purpose, we draw $K=10^5$ samples from the joint posterior distribution over model parameters and latent state trajectories using the MMH sampler without thinning. The normalized autocorrelation functions (ACFs) of successive samples of the model parameters $\bm{\theta}$ are shown in Figure~\ref{fig:autocorrelation}. For all three parameters, the ACF decays significantly by a lag of 150. Consequently, we apply a thinning procedure with $k_d=150$ in the following experiments to obtain approximately independent samples.

\begin{figure}[t]
    \definecolor{mycolor1}{rgb}{0.92941,0.69412,0.12549}
    \pgfplotsset{width=9cm, compat = 1.18, 
	height = 4.5cm, grid= major, 
	legend cell align = left, ticklabel style = {font=\scriptsize},
	every axis label/.append style={font=\scriptsize},
	legend style = {font=\scriptsize},
    }
    \def\file{data/autocorrelation.txt}
		
    \centering
    \begin{tikzpicture}
	\begin{axis}[
		grid=both,
		xmin=0, xmax=200,
		ymin=-0.1, ymax=1,
		xtick distance=20,
		ytick distance=0.25,
		ylabel=ACF, xlabel=Lag,
		set layers=standard,
		legend style={font=\scriptsize, at={(1,1)},anchor=north east, row sep=2pt},
		ylabel shift = -6 pt]
				
		\addplot[ultra thick,OrangeRed] table[x=lag,y=theta_1]{\file};
		\addlegendentry{$\theta_1$}
				
		\addplot[ultra thick,mycolor1] table[x=lag,y=theta_2]{\file};
		\addlegendentry{$\theta_2$}
				
		\addplot[ultra thick,OliveGreen] table[x=lag,y=theta_3]{\file};
		\addlegendentry{$\theta_3$}
				
        \end{axis}
    \end{tikzpicture}
    \vspace*{-0.8cm}
    \caption{Normalized autocorrelation function (ACF) of successive samples generated by the MMH sampler without thinning. The red, yellow, and green lines correspond to the ACFs of the sampled values for $\theta_1$, $\theta_2$, and $\theta_3$, respectively.}
    \label{fig:autocorrelation}
    \vspace*{-0.5cm}
\end{figure}

Next, we aim to find a set $\mathcal{C}$ that is forward invariant using the approach proposed in Section~\ref{sec:barrier_certificates}. To ensure a robust certificate, the training set size is chosen as $K_{\text{train}}=750$. An equally sized test set ($K_{\text{test}}=750$) serves to evaluate the certificate and to derive probabilistic guarantees at confidence level $1-\beta=\SI{99.9}{\percent}$. In total, $K=1500$ posterior draws are generated using the MMH sampler with thinning.

We impose the constraint $\mathcal{X}_{\text{min}} \subset \mathcal{C} \subseteq \mathcal{X}_{\text{max}}$ where $\mathcal{X}_{\text{min}}$ and $\mathcal{X}_{\text{max}}$ are circular regions with radii $\rho_{\text{min}} = 1$ and $\rho_{\text{max}} = 3$, respectively. This reflects the requirement that all states with $\lVert \bm{x} \rVert > 3$ are inadmissible, while the choice of $\mathcal{X}_{\text{min}}$ ensures that the invariant set contains the initial state with probability greater than \SI{99}{\percent}. To translate these sets into the SOS framework and restrict all constraints to this region, we use the polynomials $r_{\text{min}}(\bm{x}) = \bm{x}^{\mathsf T}\bm{x} - \rho_\text{min}^2$ and $r_{\text{max}}(\bm{x}) = \bm{x}^{\mathsf T}\bm{x} - \rho_\text{max}^2$. The search for suitable functions $h_{\text{min}}(\cdot)$ and $h_{\text{max}}(\cdot)$ is included in the SOS program, as outlined in Remark~\ref{rem:circular_searchspace}. For the barrier function candidate $h(\cdot)$, we select polynomial degrees $\{0,2,4,6\}$. For $q_1$, $q_2$, and $q_3$, we use degrees $\{0,2,4\}$, and for $q_4$ and $q_5$, we fix the degree to 2. We use SOSTOOLS~\cite{prajna2002}, together with the solver MOSEK \cite{andersen2000}, both to solve the SOS program~\eqref{eq:SOS_program} and to validate the resulting barrier function candidate using condition~\eqref{eq:validation}.

To assess the robustness of the proposed method, the process of computing and validating a barrier function is repeated 100 times. In each run, a new dataset $\mathbb{D}$ is generated as described above. The SOS program~\eqref{eq:SOS_program} is feasible in all 100 runs, yielding a barrier function each time. The values of the \SI{99.9}{\percent} confidence lower bound on the probability that the computed sets are forward invariant for the true system, as given by Theorem~\ref{theo:robustness_guarantees}, range between \SI{97.41}{\percent} and \SI{99.08}{\percent}, with a mean of \SI{98.75}{\percent}. In all cases, the computed set is forward invariant for the true system~\eqref{eq:fitzhugh-nagumo}.

\begin{figure}[t]
    \definecolor{mycolor1}{rgb}{0.92941,0.69412,0.12549}%
    \definecolor{mycolor2}{rgb}{0.85098,0.32549,0.09804}%
    \definecolor{asparagus}{rgb}{0.53, 0.66, 0.42}
    \usetikzlibrary{decorations.markings}
    \pgfplotsset{width=7cm, 
		height = 7cm, grid= none, 
		legend cell align = left, ticklabel style = {font=\scriptsize},
		every axis label/.append style={font=\scriptsize},
		legend style = {font=\scriptsize}}
    \centering
    \begin{tikzpicture}
    \begin{axis}[%
            xmin=-3.2, xmax=3.2,
    	ymin=-3.2, ymax=3.2,
            xtick = {-3,-2,-1,0,1,2,3},
            ytick = {-3,-2,-1,0,1,2,3},
    	ylabel=$x_2$, xlabel=$x_1$,
    	set layers=standard, 
            legend style={font=\scriptsize, row sep=-3pt, inner xsep=1pt, inner ysep=1pt},
            ylabel shift = -6 pt, legend entries={$\mathcal{X}_\text{max}$, $\mathcal{X}_\text{min}$, $\mathcal{C}$, Limit cycle}]    
            
            \draw[thick, black, fill = lightgray!70] (axis cs:0,0) circle (3);
            \draw[thick, color=OliveGreen, fill = OliveGreen!20] (axis cs:0,0) circle (1);
        
            \addlegendimage{only marks, mark=o, thick, black}  
            \addlegendimage{only marks, mark=o, thick, OliveGreen}  
        
            \def\file{data/C.txt}	               
            \addplot[ultra thick, OrangeRed] table[x=x_1,y=x_2]{\file};   
            
            \def\file{data/Limitcycle.txt}	            
            \addplot[ultra thick, mycolor1, forget plot, postaction={decorate}, 
                decoration={markings, mark=between positions 0.33 and 1.5 step 0.4 with {\arrow{stealth};}}] 
                table[x=x_1,y=x_2]{\file};
            \addplot[ultra thick, mycolor1] table[x=x_1,y=x_2]{\file};            

            \addplot[-Straight Barb, forget plot, color=OrangeRed, point meta={sqrt((\thisrow{u})^2+(\thisrow{v})^2)}, point meta min=0, quiver={u=\thisrow{u}, v=\thisrow{v}, every arrow/.append style={-{Straight Barb[angle'=18.263, scale={1/1500*\pgfplotspointmetatransformed}]}}}]table[row sep=newline] {%
            x1	x2	u	v          
    1.4942   -2.4586    0.0309    0.0014
    1.5564   -2.4706    0.0128    0.0014
    1.6316   -2.4710   -0.0120    0.0014
    1.6917   -2.4582   -0.0346    0.0015
    1.7519   -2.4317   -0.0597    0.0015
    1.7970   -2.4012   -0.0802    0.0015
    1.8334   -2.3684   -0.0981    0.0015
    1.8722   -2.3247   -0.1183    0.0015
    1.9023   -2.2829   -0.1349    0.0015
    1.9323   -2.2332   -0.1525    0.0015
    1.9557   -2.1880   -0.1668    0.0015
    1.9774   -2.1399   -0.1807    0.0015
    1.9999   -2.0827   -0.1958    0.0015
    2.0204   -2.0226   -0.2102    0.0015
    2.0376   -1.9645   -0.2229    0.0015
    2.0526   -1.9067   -0.2345    0.0015
    2.0672   -1.8421   -0.2463    0.0015
    2.0790   -1.7820   -0.2562    0.0015
    2.0892   -1.7218   -0.2654    0.0015
    2.0979   -1.6617   -0.2737    0.0015
    2.1067   -1.5865   -0.2829    0.0014
    2.1128   -1.5216   -0.2898    0.0014
    2.1176   -1.4511   -0.2963    0.0014
    2.1208   -1.3759   -0.3021    0.0014
    2.1220   -1.3008   -0.3066    0.0014
    2.1212   -1.2256   -0.3099    0.0014
    2.1185   -1.1504   -0.3119    0.0013
    2.1136   -1.0752   -0.3127    0.0013
    2.1082   -1.0150   -0.3123    0.0013
    2.0994   -0.9398   -0.3107    0.0013
    2.0907   -0.8797   -0.3084    0.0013
    2.0805   -0.8195   -0.3053    0.0012
    2.0677   -0.7542   -0.3009    0.0012
    2.0526   -0.6881   -0.2954    0.0012
    2.0376   -0.6296   -0.2896    0.0012
    2.0226   -0.5768   -0.2837    0.0012
    2.0043   -0.5188   -0.2764    0.0011
    1.9835   -0.4586   -0.2681    0.0011
    1.9624   -0.4030   -0.2596    0.0011
    1.9420   -0.3534   -0.2514    0.0011
    1.9173   -0.2977   -0.2417    0.0010
    1.8935   -0.2481   -0.2324    0.0010
    1.8705   -0.2030   -0.2235    0.0010
    1.8421   -0.1511   -0.2129    0.0010
    1.8120   -0.0998   -0.2019    0.0009
    1.7825   -0.0526   -0.1914    0.0009
    1.7525   -0.0075   -0.1811    0.0009
    1.7218    0.0359   -0.1709    0.0009
    1.6917    0.0762   -0.1613    0.0008
    1.6617    0.1144   -0.1520    0.0008
    1.6255    0.1579   -0.1414    0.0008
    1.5865    0.2023   -0.1304    0.0007
    1.5564    0.2347   -0.1224    0.0007
    1.5139    0.2782   -0.1117    0.0007
    1.4812    0.3100   -0.1039    0.0007
    1.4361    0.3517   -0.0939    0.0006
    1.3998    0.3835   -0.0863    0.0006
    1.3609    0.4160   -0.0788    0.0006
    1.3158    0.4518   -0.0707    0.0005
    1.2707    0.4859   -0.0633    0.0005
    1.2256    0.5183   -0.0567    0.0005
    1.1810    0.5489   -0.0507    0.0005
    1.1353    0.5787   -0.0453    0.0004
    1.0902    0.6070   -0.0406    0.0004
    1.0451    0.6341   -0.0365    0.0004
    1.0000    0.6601   -0.0330    0.0003
    0.9549    0.6853   -0.0301    0.0003
    0.9011    0.7143   -0.0272    0.0003
    0.8496    0.7411   -0.0252    0.0002
    0.8045    0.7641   -0.0240    0.0002
    0.7535    0.7895   -0.0232    0.0002
    0.6992    0.8160   -0.0229    0.0001
    0.6541    0.8379   -0.0232    0.0001
    0.5987    0.8647   -0.0240    0.0001
    0.5489    0.8887   -0.0253    0.0001
    0.5038    0.9108   -0.0267    0.0000
    0.4450    0.9398   -0.0291   -0.0000
    0.3985    0.9633   -0.0314   -0.0000
    0.3534    0.9866   -0.0339   -0.0001
    0.3000    1.0150   -0.0371   -0.0001
    0.2481    1.0436   -0.0405   -0.0001
    0.2030    1.0693   -0.0437   -0.0002
    0.1579    1.0960   -0.0471   -0.0002
    0.1128    1.1236   -0.0506   -0.0002
    0.0677    1.1523   -0.0542   -0.0003
    0.0226    1.1821   -0.0580   -0.0003
   -0.0226    1.2130   -0.0618   -0.0003
   -0.0677    1.2451   -0.0656   -0.0003
   -0.1128    1.2785   -0.0695   -0.0004
   -0.1579    1.3131   -0.0734   -0.0004
   -0.1992    1.3459   -0.0769   -0.0004
   -0.2360    1.3759   -0.0799   -0.0005
   -0.2782    1.4113   -0.0834   -0.0005
   -0.3233    1.4504   -0.0870   -0.0005
   -0.3580    1.4812   -0.0897   -0.0005
   -0.3985    1.5180   -0.0927   -0.0006
   -0.4398    1.5564   -0.0956   -0.0006
   -0.4737    1.5885   -0.0978   -0.0006
   -0.5184    1.6316   -0.1005   -0.0007
   -0.5491    1.6617   -0.1023   -0.0007
   -0.5940    1.7063   -0.1045   -0.0007
   -0.6244    1.7368   -0.1059   -0.0007
   -0.6688    1.7820   -0.1076   -0.0008
   -0.6992    1.8131   -0.1085   -0.0008
   -0.7422    1.8571   -0.1095   -0.0008
   -0.7744    1.8903   -0.1100   -0.0009
   -0.8153    1.9323   -0.1103   -0.0009
   -0.8496    1.9676   -0.1102   -0.0009
   -0.8889    2.0075   -0.1097   -0.0009
   -0.9248    2.0437   -0.1089   -0.0010
   -0.9640    2.0827   -0.1075   -0.0010
   -1.0000    2.1178   -0.1059   -0.0010
   -1.0421    2.1579   -0.1034   -0.0011
   -1.0752    2.1886   -0.1010   -0.0011
   -1.1203    2.2289   -0.0972   -0.0011
   -1.1604    2.2632   -0.0931   -0.0011
   -1.1974    2.2932   -0.0887   -0.0012
   -1.2406    2.3262   -0.0829   -0.0012
   -1.2857    2.3578   -0.0759   -0.0012
   -1.3308    2.3864   -0.0680   -0.0013
   -1.3798    2.4135   -0.0583   -0.0013
   -1.4361    2.4392   -0.0457   -0.0013
   -1.4942    2.4586   -0.0309   -0.0014
   -1.5564    2.4706   -0.0128   -0.0014
   -1.6316    2.4710    0.0120   -0.0014
   -1.6917    2.4582    0.0346   -0.0015
   -1.7519    2.4317    0.0597   -0.0015
   -1.7970    2.4012    0.0802   -0.0015
   -1.8334    2.3684    0.0981   -0.0015
   -1.8722    2.3247    0.1183   -0.0015
   -1.9023    2.2829    0.1349   -0.0015
   -1.9323    2.2332    0.1525   -0.0015
   -1.9557    2.1880    0.1668   -0.0015
   -1.9774    2.1399    0.1807   -0.0015
   -1.9999    2.0827    0.1958   -0.0015
   -2.0204    2.0226    0.2102   -0.0015
   -2.0376    1.9645    0.2229   -0.0015
   -2.0526    1.9067    0.2345   -0.0015
   -2.0672    1.8421    0.2463   -0.0015
   -2.0790    1.7820    0.2562   -0.0015
   -2.0892    1.7218    0.2654   -0.0015
   -2.0979    1.6617    0.2737   -0.0015
   -2.1067    1.5865    0.2829   -0.0014
   -2.1128    1.5216    0.2898   -0.0014
   -2.1176    1.4511    0.2963   -0.0014
   -2.1208    1.3759    0.3021   -0.0014
   -2.1220    1.3008    0.3066   -0.0014
   -2.1212    1.2256    0.3099   -0.0014
   -2.1185    1.1504    0.3119   -0.0013
   -2.1136    1.0752    0.3127   -0.0013
   -2.1082    1.0150    0.3123   -0.0013
   -2.0994    0.9398    0.3107   -0.0013
   -2.0907    0.8797    0.3084   -0.0013
   -2.0805    0.8195    0.3053   -0.0012
   -2.0677    0.7542    0.3009   -0.0012
   -2.0526    0.6881    0.2954   -0.0012
   -2.0376    0.6296    0.2896   -0.0012
   -2.0226    0.5768    0.2837   -0.0012
   -2.0043    0.5188    0.2764   -0.0011
   -1.9835    0.4586    0.2681   -0.0011
   -1.9624    0.4030    0.2596   -0.0011
   -1.9420    0.3534    0.2514   -0.0011
   -1.9173    0.2977    0.2417   -0.0010
   -1.8935    0.2481    0.2324   -0.0010
   -1.8705    0.2030    0.2235   -0.0010
   -1.8421    0.1511    0.2129   -0.0010
   -1.8120    0.0998    0.2019   -0.0009
   -1.7825    0.0526    0.1914   -0.0009
   -1.7525    0.0075    0.1811   -0.0009
   -1.7218   -0.0359    0.1709   -0.0009
   -1.6917   -0.0762    0.1613   -0.0008
   -1.6617   -0.1144    0.1520   -0.0008
   -1.6255   -0.1579    0.1414   -0.0008
   -1.5865   -0.2023    0.1304   -0.0007
   -1.5564   -0.2347    0.1224   -0.0007
   -1.5139   -0.2782    0.1117   -0.0007
   -1.4812   -0.3100    0.1039   -0.0007
   -1.4361   -0.3517    0.0939   -0.0006
   -1.3998   -0.3835    0.0863   -0.0006
   -1.3609   -0.4160    0.0788   -0.0006
   -1.3158   -0.4518    0.0707   -0.0005
   -1.2707   -0.4859    0.0633   -0.0005
   -1.2256   -0.5183    0.0567   -0.0005
   -1.1810   -0.5489    0.0507   -0.0005
   -1.1353   -0.5787    0.0453   -0.0004
   -1.0902   -0.6070    0.0406   -0.0004
   -1.0451   -0.6341    0.0365   -0.0004
   -1.0000   -0.6601    0.0330   -0.0003
   -0.9549   -0.6853    0.0301   -0.0003
   -0.9011   -0.7143    0.0272   -0.0003
   -0.8496   -0.7411    0.0252   -0.0002
   -0.8045   -0.7641    0.0240   -0.0002
   -0.7535   -0.7895    0.0232   -0.0002
   -0.6992   -0.8160    0.0229   -0.0001
   -0.6541   -0.8379    0.0232   -0.0001
   -0.5987   -0.8647    0.0240   -0.0001
   -0.5489   -0.8887    0.0253   -0.0001
   -0.5038   -0.9108    0.0267   -0.0000
   -0.4450   -0.9398    0.0291    0.0000
   -0.3985   -0.9633    0.0314    0.0000
   -0.3534   -0.9866    0.0339    0.0001
   -0.3000   -1.0150    0.0371    0.0001
   -0.2481   -1.0436    0.0405    0.0001
   -0.2030   -1.0693    0.0437    0.0002
   -0.1579   -1.0960    0.0471    0.0002
   -0.1128   -1.1236    0.0506    0.0002
   -0.0677   -1.1523    0.0542    0.0003
   -0.0226   -1.1821    0.0580    0.0003
    0.0226   -1.2130    0.0618    0.0003
    0.0677   -1.2451    0.0656    0.0003
    0.1128   -1.2785    0.0695    0.0004
    0.1579   -1.3131    0.0734    0.0004
    0.1992   -1.3459    0.0769    0.0004
    0.2360   -1.3759    0.0799    0.0005
    0.2782   -1.4113    0.0834    0.0005
    0.3233   -1.4504    0.0870    0.0005
    0.3580   -1.4812    0.0897    0.0005
    0.3985   -1.5180    0.0927    0.0006
    0.4398   -1.5564    0.0956    0.0006
    0.4737   -1.5885    0.0978    0.0006
    0.5184   -1.6316    0.1005    0.0007
    0.5491   -1.6617    0.1023    0.0007
    0.5940   -1.7063    0.1045    0.0007
    0.6244   -1.7368    0.1059    0.0007
    0.6688   -1.7820    0.1076    0.0008
    0.6992   -1.8131    0.1085    0.0008
    0.7422   -1.8571    0.1095    0.0008
    0.7744   -1.8903    0.1100    0.0009
    0.8153   -1.9323    0.1103    0.0009
    0.8496   -1.9676    0.1102    0.0009
    0.8889   -2.0075    0.1097    0.0009
    0.9248   -2.0437    0.1089    0.0010
    0.9640   -2.0827    0.1075    0.0010
    1.0000   -2.1178    0.1059    0.0010
    1.0421   -2.1579    0.1034    0.0011
    1.0752   -2.1886    0.1010    0.0011
    1.1203   -2.2289    0.0972    0.0011
    1.1604   -2.2632    0.0931    0.0011
    1.1974   -2.2932    0.0887    0.0012
    1.2406   -2.3262    0.0829    0.0012
    1.2857   -2.3578    0.0759    0.0012
    1.3308   -2.3864    0.0680    0.0013
    1.3798   -2.4135    0.0583    0.0013
    1.4361   -2.4392    0.0457    0.0013
    1.4942   -2.4586    0.0309    0.0014 
            };
    \end{axis}            
    \end{tikzpicture}%
    \vspace{-.4cm}
    \caption{Visualization of the obtained forward-invariant set. The red curve marks the boundary of the set $\mathcal{C}$. Red arrows illustrate the system state one step into the future from selected points on this boundary, indicating inward motion. The circular regions $\mathcal{X}_\text{min}$ and $\mathcal{X}_\text{max}$ are shown in green and grey, respectively. The yellow curve depicts the system’s limit cycle, which lies entirely within $\mathcal{C}$.}
    \label{fig:invariant_set}
\end{figure}

Figure~\ref{fig:invariant_set} shows the results of an exemplary run. The corresponding forward-invariant set $\mathcal{C}$ satisfies $\mathcal{X}_{\text{min}} \subset \mathcal{C} \subseteq \mathcal{X}_{\text{max}}$ and fully encloses the system's limit cycle. The corresponding barrier function $h(\cdot)$ in this run is given by
\begin{subequations}
    \begin{align*}
        h(\bm{x}) = 
              0.064x_1^6 + 0.075x_1^5x_2 + 0.051x_1^4x_2^2 + 0.024x_1^3x_2^3\\
              + 0.028x_1^2x_2^4 + 0.030x_1x_2^5 - 0.008x_2^6 - 1.890x_1^4\\
              - 1.128x_1^3x_2 - 1.605x_1^2x_2^2 - 0.913x_1x_2^3 + 0.124x_2^4\\
              + 1.895x_1^2 - 8.120x_1x_2 - 7.135x_2^2 + 9.997
    \end{align*}
\end{subequations}
For comparison, we repeat the same procedure using samples drawn from the prior instead of the posterior distribution. In this case, the SOS program is infeasible in all 100 runs, and no barrier function candidate can be found. These results, summarized in Table~\ref{tab:posterior_prior_comparison}, highlight the importance of reducing model uncertainty through the MMH sampler presented in Section~\ref{sec:MMH}.

\begin{table}[t]
    \centering
    \caption{Comparison of results using posterior vs. prior sampling}
    \label{tab:posterior_prior_comparison}
    \vspace*{-0.25cm}
    \begin{tabular}{lcc}
        \hline
         & Posterior (proposed method) & Prior (baseline) \\
        \hline
        SOS program feasible & 100 / 100 & 0 / 100 \\
        $1-\delta$ (mean $\pm$ std) & \SI{98.75}{\percent} $\pm$  \SI{0.42}{\percent}& -- \\
        Valid for true system & 100 / 100 & -- \\
        \hline
    \end{tabular}
    \vspace*{-0.5cm}
\end{table}

Overall, the results demonstrate that the proposed method can reliably synthesize and validate barrier certificates under significant model uncertainty, using only noisy output data and without requiring access to full-state measurements.
    \section{Conclusion}
\label{sec:conclusion}
This paper presents a novel approach for synthesizing probabilistic barrier certificates for unknown systems with latent states and polynomial dynamics using only noisy output measurements. A Bayesian framework is employed to systematically incorporate prior knowledge while providing rigorous uncertainty quantification, and a targeted marginal Metropolis--Hastings sampler is used to generate samples from the posterior distribution over the dynamics. These samples are then used to formulate the search for a barrier certificate as a sum-of-squares optimization problem, and an independent test set of posterior samples is used to establish finite-sample probabilistic guarantees that the certificate is valid for the true system. While this work addresses autonomous systems, extending the approach to output-feedback control laws presents additional challenges and is left for future investigation.

    \newpage
    \bibliographystyle{IEEEtran}
    \bibliography{ms}

@inproceedings{ahmadi2019,
  title={Safe policy synthesis in multi-agent {POMDP}s via discrete-time barrier functions},
  author={Ahmadi, Mohamadreza and Singletary, Andrew and Burdick, Joel W and Ames, Aaron D},
  booktitle={2019 IEEE 58th Conference on Decision and Control (CDC)},
  pages={4797--4803},
  year={2019}
}

@inproceedings{ames2019,
    title={Control barrier functions: Theory and applications},
    author={Ames, Aaron D and Coogan, Samuel and Egerstedt, Magnus and Notomista, Gennaro and Sreenath, Koushil and Tabuada, Paulo},
    booktitle={2019 18th European Control Conference (ECC)},
    pages={3420--3431},
    year={2019}
}

@incollection{andersen2000,
  title={The {MOSEK} interior point optimizer for linear programming: an implementation of the homogeneous algorithm},
  author={Andersen, Erling D and Andersen, Knud D},
  booktitle={High Performance Optimization},
  pages={197--232},
  year={2000},
  publisher={Springer}
}

@article{andrieu2010,
    title={Particle {M}arkov chain {M}onte {C}arlo methods},
    author={Andrieu, Christophe and Doucet, Arnaud and Holenstein, Roman},
    journal={Journal of the Royal Statistical Society Series B: Statistical Methodology},
    volume={72},
    number={3},
    pages={269--342},
    year={2010}
}

@article{clopper1934,
  title={The use of confidence or fiducial limits illustrated in the case of the binomial},
  author={Clopper, Charles J and Pearson, Egon S},
  journal={Biometrika},
  volume={26},
  number={4},
  pages={404--413},
  year={1934},
  publisher={JSTOR}
}

@article{cebrian2024,
  title={Six decades of the {FitzHugh--Nagumo} model: A guide through its spatio-temporal dynamics and influence across disciplines},
  author={Cebri{\'a}n-Lacasa, Daniel and Parra-Rivas, Pedro and Ruiz-Reyn{\'e}s, Daniel and Gelens, Lendert},
  journal={Physics Reports},
  volume={1096},
  pages={1--39},
  year={2024},
  publisher={Elsevier}
}

@article{dhiman2021,
  title={Control barriers in {B}ayesian learning of system dynamics},
  author={Dhiman, Vikas and Khojasteh, Mohammad Javad and Franceschetti, Massimo and Atanasov, Nikolay},
  journal={IEEE Transactions on Automatic Control},
  volume={68},
  number={1},
  pages={214--229},
  year={2021}
}

@article{gelman1997,
  title={Weak convergence and optimal scaling of random walk Metropolis algorithms},
  author={Gelman, Andrew and Gilks, Walter R and Roberts, Gareth O},
  journal={The Annals of Applied Probability},
  volume={7},
  number={1},
  pages={110--120},
  year={1997},
  publisher={Institute of Mathematical Statistics}
}

@book{hochberg1987,
  title={Multiple comparison procedures},
  author={Hochberg, Yosef and Tamhane, Ajit C},
  year={1987},
  publisher={John Wiley \& Sons, Inc.}
}

@inproceedings{jackson2020,
  title={Safety verification of unknown dynamical systems via {G}aussian process regression},
  author={Jackson, John and Laurenti, Luca and Frew, Eric and Lahijanian, Morteza},
  booktitle={2020 59th IEEE Conference on Decision and Control (CDC)},
  pages={860--866},
  year={2020}
}

@inproceedings{jagtap2020,
  title={Control barrier functions for unknown nonlinear systems using {G}aussian processes},
  author={Jagtap, Pushpak and Pappas, George J and Zamani, Majid},
  booktitle={2020 59th IEEE Conference on Decision and Control (CDC)},
  pages={3699--3704},
  year={2020}
}

@inproceedings{lefringhausen2024,
  title={Learning-Based Optimal Control with Performance Guarantees for Unknown Systems with Latent States},
  author={Lefringhausen, Robert and Srithasan, Supitsana and Lederer, Armin and Hirche, Sandra},
  booktitle={2024 European Control Conference (ECC)},
  pages={90--97},
  year={2024}
}

@article{nejati2023,
  title={Formal verification of unknown discrete-and continuous-time systems: A data-driven approach},
  author={Nejati, Ameneh and Lavaei, Abolfazl and Jagtap, Pushpak and Soudjani, Sadegh and Zamani, Majid},
  journal={IEEE Transactions on Automatic Control},
  volume={68},
  number={5},
  pages={3011--3024},
  year={2023}
}

@misc{openai2025,
  author       = {OpenAI},
  title        = {{ChatGPT}},
  howpublished = {\url{https://chat.openai.com}},
  year         = {2025},
}

@inproceedings{papachristodoulou2005,
  title={A tutorial on sum of squares techniques for systems analysis},
  author={Papachristodoulou, Antonis and Prajna, Stephen},
  booktitle={2005 American Control Conference (ACC)},
  pages={2686--2700},
  year={2005}
}

@inproceedings{prajna2002,
  title={Introducing {SOSTOOLS}: A general purpose sum of squares programming solver},
  author={Prajna, Stephen and Papachristodoulou, Antonis and Parrilo, Pablo A},
  booktitle={2002 41st IEEE Conference on Decision and Control (CDC)},
  volume={1},
  pages={741--746},
  year={2002}
}

@book{robert2004,
    title={Monte {C}arlo statistical methods},
    author={Robert, Christian P and Casella, George},
    volume={2},
    year={2004},
    publisher={Springer}
}

@inproceedings{salamati2022,
  title={Data-driven safety verification of stochastic systems via barrier certificates: A wait-and-judge approach},
  author={Salamati, Ali and Zamani, Majid},
  booktitle={Learning for Dynamics and Control Conference},
  pages={441--452},
  year={2022}
}

@article{salamati2024,
  title={Data-driven verification and synthesis of stochastic systems via barrier certificates},
  author={Salamati, Ali and Lavaei, Abolfazl and Soudjani, Sadegh and Zamani, Majid},
  journal={Automatica},
  volume={159},
  pages={111323},
  year={2024}
}

@article{schneeberger2023,
  title={{SOS} construction of compatible control {L}yapunov and barrier functions},
  author={Schneeberger, Michael and D{\"o}rfler, Florian and Mastellone, Silvia},
  journal={IFAC-PapersOnLine},
  volume={56},
  number={2},
  pages={10428--10434},
  year={2023}
}

@article{thulin2014,
author = {Thulin, M{\aa}ns},
title = {The cost of using exact confidence intervals for a binomial proportion},
journal = {Electronic Journal of Statistics},
volume = {8},
number = {1},
pages = {817 -- 840},
year = {2014}
}

@article{tierney1994,
  title={Markov chains for exploring posterior distributions},
  author={Tierney, Luke},
  journal={the Annals of Statistics},
  pages={1701--1728},
  year={1994}
}

@inproceedings{zhang2024,
    author={Zhang, Sihua and Zhai, Di-Hua and Dai, Xiaobing and Huang, Tzu-Yuan and Xia, Yuanqing and Hirche, Sandra}, 
    title={Learning-based Parameterized Barrier Function for Safety-Critical Control of Unknown Systems}, 
    booktitle={2024 IEEE 63rd Conference on Decision and Control (CDC)},
    pages={8805-8810},
    year={2024}
}

\end{document}